\numberwithin{equation}{section}
\newtheoremstyle{thm}
{9pt}
{9pt}
{\itshape}
{}
{\bfseries}
{.}
{ }
{}
\theoremstyle{thm}
\newtheorem{theorem}{Theorem}[section]
\newtheoremstyle{def}
{9pt}
{9pt}
{}
{}
{\bfseries}
{.}
{ }
{}
\theoremstyle{def}
\renewcommand{\footnoterule}{%
 & \kern -3.5pt
 & \hrule width \textwidth height 1pt
 & \kern 3.5pt
}
\def\blfootnote{\xdef\@thefnmark{}\@footnotetext}
\begin{document}

\title{\bf On a new class of tests for the Pareto distribution using Fourier methods}

\author{L. Ndwandwe, J.S. Allison, M. Smuts, I.J.H. Visagie}

\date{\today}
\maketitle

\begin{abstract}
We propose new classes of tests for the Pareto type I distribution using the empirical characteristic function. These tests are $U$ and $V$ statistics based on a characterisation of the Pareto distribution involving the distribution of the sample minimum. In addition to deriving simple computational forms for the proposed test statistics, we prove consistency against a wide range of fixed alternatives. A Monte Carlo study is included in which the newly proposed tests are shown to produce high powers. These powers include results relating to fixed alternatives as well as local powers against mixture distributions. The use of the proposed tests is illustrated using an observed data set.
\end{abstract}

\textbf{Keywords:} Empirical characteristic function, Goodness-of-fit testing, Pareto distribution, $V$ and $U$ statistics.

\section{Introduction}
\label{Sect1}

The Pareto distribution, nowadays commonly known as the Pareto type I distribution, was originally introduced by \cite{Par}. Since then several extensions of this distribution have been proposed. These extensions are achieved via the inclusion of a location, scale and inequality parameter, corresponding to the Pareto type II, III and IV distributions, respectively. Additionally, a so-called generalised Pareto distribution has been introduced. For an in depth discussion of the various types of Pareto distributions as well as the relationships between them, the interested reader is referred to \cite{Arn2015}.

The Pareto distribution is a popular model in engineering, economics, finance and actuarial science, especially where phenomena characterised by heavy tails are studied, see, e.g. \cite{Fis1961}, \cite{Ism2004}, \cite{nofal2017new}. Concrete examples of the use of the Pareto distribution include the modelling of failure times of mechanical components, see \cite{BSF2016}, as well as the modelling of excess of losses in insurance claims, see \cite{rytgaard1990estimation}. Due to its heavy tail, this distribution also plays a pivotal role in extreme value theory see, \cite{BeiGoeSegTeu}. Further examples of the use of the Pareto distribution can be found in \cite{Ami2007} and \cite{Sol2000}. A number of characterisations for the Pareto distribution have been developed in the literature, see, e.g. \cite{gupta1973characteristic} as well as \cite{Par}.

Due to the popularity of this distribution as well as its wide range of applications, goodness-of-fit tests have been developed in order to test the hypothesis that an observed dataset is compatible with the assumption of being realised from this distribution. For recent overview papers and discussions of some of these tests, see \cite{CDN2019} and \cite{ndwandwe2022testing} as well as the references therein. \cite{CDN2019} review tests for the generalised Pareto distribution as well as the Pareto types I and II, whereas \cite{ndwandwe2022testing} investigates several existing tests specifically for the Pareto type I distribution. Although tests exist for the Pareto type I distribution, they are few in number when compared to those for other distributions such as, for example, the normal or exponential distributions. In the remainder of this paper, we refer to the Pareto type I distribution simply as the Pareto distribution.

We propose new classes of goodness-of-fit tests for the Pareto distribution based on a characterisation involving the distribution of the sample minimum. In order to proceed, we introduce some notation. Let $X_1,\dots,X_n$ be independent and identically distributed (i.i.d.) realisations from a continuous random variable, $X$, with unknown distribution function $F$. Let $X_{1:n}<\dots<X_{n:n}$ denote the order statistics of $X_{1},\dots,X_{n}$. $X$ is said to follow the Pareto distribution with shape parameter $\beta$, denoted by $X \sim P(\beta)$, if it has distribution function
\begin{equation*}
    F(x) = 1-x^{-\beta}, \ \ \ x \ge 1,
\end{equation*}
for some $\beta>0$. The composite null hypothesis to be tested is
\begin{equation}
    H_{0}: X \sim P(\beta),\label{hyp}
\end{equation}
for some unspecified $\beta>0$. This hypothesis is tested against general alternatives. Throughout this paper, the value of $\beta$ is estimated by its method of moments estimator $\widehat{\beta}_n=\overline{X}_n/(\overline{X}_n-1)$, where $\overline{X}_n$ is the sample mean.

The remainder of this paper is structured as follows. In Section \ref{Sect2}, new classes of tests are proposed for the Pareto distribution, whereas Section \ref{Sect3} contains theoretical results pertaining to the asymptotic behaviour of the tests. A Monte Carlo study is presented in Section \ref{Sect4}, while Section \ref{Sect5} contains an example pertaining to observed data. The paper concludes in Section \ref{Sect6}.

\section{A new class of tests for the Pareto distribution}
\label{Sect2}

The proposed tests are based on a characterisation of the Pareto distribution via the distribution of the sample minimum. This characterisation is discussed in \cite{allison2021distribution} and is as follows:
\begin{theorem} \label{char}
Let $X,X_1,\dots,X_n$ be i.i.d. random variables from a continuous distribution with distribution function $F$. Let $m$ be an integer such that $2 \le m \le n$. $X^{1/m}$ and $\min(X_1,\dots,X_m)$ have the same distribution if, and only if,
\begin{equation*}
    F(x) = 1-x^{-\beta}, \ \ \ x \ge 1,
\end{equation*}
for some $\beta > 0$.
\end{theorem}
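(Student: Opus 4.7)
The plan is to convert the distributional identity into a functional equation for the survival function $\bar F := 1-F$ and then solve it. Since $X_1,\dots,X_m$ are i.i.d., $P(\min(X_1,\dots,X_m)>x) = \bar F(x)^m$, while $P(X^{1/m}>x) = P(X>x^m) = \bar F(x^m)$, so the equality in distribution in the theorem is equivalent to
\begin{equation*}
\bar F(x^m) = \bar F(x)^m \quad \text{for every } x > 0.
\end{equation*}
The sufficiency ($\Leftarrow$) direction is then a one-line verification: substituting $\bar F(x)=x^{-\beta}$ on $[1,\infty)$ yields $\bar F(x^m) = x^{-m\beta} = \bar F(x)^m$, and both sides equal $1$ for $x<1$.

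For the necessity ($\Rightarrow$) direction, I would first locate the lower endpoint of the support. Plugging $x=1$ into the functional equation gives $\bar F(1) = \bar F(1)^m$, so $\bar F(1) \in \{0,1\}$. The case $\bar F(1)=0$ would force $X \le 1$ almost surely, and re-applying the functional equation on $(0,1]$ combined with the continuity of $F$ can be seen to collapse $F$ onto a point mass, contradicting continuity. Hence $\bar F(1)=1$ and the support of $F$ lies in $[1,\infty)$. Passing to the log-scale via $x=e^t$ with $t \ge 0$ and setting $\phi(t) := -\log \bar F(e^t)$, the functional equation becomes
\begin{equation*}
\phi(mt) = m\, \phi(t), \qquad t \ge 0,
\end{equation*}
where $\phi$ is continuous, non-decreasing, $\phi(0)=0$, and $\phi(t)\to\infty$ as $t\to\infty$.

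The principal obstacle is to deduce from this multiplicative Cauchy-type identity that $\phi$ is linear, $\phi(t)=\beta t$ for some $\beta>0$, which then translates back to $\bar F(x) = x^{-\beta}$ on $[1,\infty)$. The natural route is to iterate the equation to obtain $\phi(m^k t) = m^k \phi(t)$ for every $k \in \Z$, which shows that $t \mapsto \phi(t)/t$ is invariant under the scaling $t \mapsto m t$; combined with monotonicity and a sandwich argument comparing $\phi(t)$ with its values at neighbouring powers of $m$, this forces the ratio $\phi(t)/t$ to be the constant $\beta := \phi(1)$. The delicate point is that the bare equation $\phi(mt)=m\phi(t)$ together with monotonicity formally admits log-periodic perturbations of the linear solution, so the argument must explicitly use the probabilistic content (in particular continuity of $F$ and the resulting continuity of $\phi$ at $t=0$ with $\phi(0)=0$), together with the monotone-Cauchy framework employed in \cite{allison2021distribution}, to rule out such pathologies and pin down the Pareto form $F(x)=1-x^{-\beta}$.
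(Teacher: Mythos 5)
A preliminary remark on scope: the paper does not actually prove Theorem \ref{char} --- it imports the characterisation wholesale from \cite{allison2021distribution} --- so your proposal has to be judged against the statement itself rather than against an internal argument. Your reduction of the distributional identity to the functional equation $\bar F(x^m)=\bar F(x)^m$, the one-line sufficiency check, and the localisation of the support to $[1,\infty)$ via $\bar F(1)\in\{0,1\}$ are all correct and are the natural first moves.

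The gap is in the final, and only substantive, step. You write that $\phi(mt)=m\phi(t)$ ``formally admits log-periodic perturbations'' but that continuity of $\phi$ at $0$ with $\phi(0)=0$, monotonicity, and a sandwich between neighbouring powers of $m$ rule them out. They do not. Take
\begin{equation*}
\phi(t)=t\left(1+\varepsilon\sin\!\left(\frac{2\pi\log t}{\log m}\right)\right),\qquad t>0,\qquad \phi(0)=0 .
\end{equation*}
For $\varepsilon>0$ small enough that $\varepsilon(1+2\pi/\log m)<1$, this $\phi$ is continuous on $[0,\infty)$ (indeed $0\le\phi(t)\le 2t$, so continuity at $0$ is automatic), strictly increasing, tends to infinity, and satisfies $\phi(mt)=m\phi(t)$ exactly, yet $\phi(t)/t$ is non-constant. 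Consequently $\bar F(x)=\exp\{-\phi(\log x)\}$ for $x\ge 1$ (and $\bar F(x)=1$ for $x<1$) is a continuous, strictly decreasing survival function obeying $\bar F(x^m)=\bar F(x)^m$ for all $x$ that is not of Pareto form. The sandwich you invoke only controls the ratio $\phi(t)/t$ up to a factor of $m$ (monotonicity between $m^k$ and $m^{k+1}$ gives nothing sharper), and continuity at $0$ adds nothing because the perturbation already vanishes there. So the necessity direction cannot be closed from the ingredients you list --- continuity of $F$, monotonicity, and a single fixed $m$. Any complete proof must bring in additional structure: for instance, validity of the identity for two values $m_1,m_2$ with $\log m_1/\log m_2$ irrational (two incommensurable periods force the log-periodic factor to be constant by continuity), or regularity assumptions on $F$ beyond mere continuity; this is precisely the input that the argument in \cite{allison2021distribution} would have to supply and that your outline leaves unresolved at the point you yourself flag as ``delicate''.
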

Since a random variable is characterised by its Fourier transform, we can base a test on the $V$ and $U$\textit{empirical characteristic functions} of $X^{1/m}$ and $\min(X_1, ..., X_m)$. Tests based on these quantities have been shown to not only possess desirable asymptotic properties, but also produce high powers in finite sample settings; the interested reader is referred to \cite{Mei2016}.

Let
\begin{equation*}
    \phi_m(t)=E\left[\textrm{e}^{itX^{1/m}}\right]
\end{equation*}
and
\begin{equation*}
    \xi_m(t)=E\left[\textrm{e}^{it\min(X_{1},\dots,X_{m})}\right]
\end{equation*}
be the characteristic functions of $X^{1/m}$ and $\min(X_1,\dots,X_m)$, respectively. Denote the empirical versions of $\phi_m$ and $\xi_m$ by
\begin{equation*}
    \phi_{n,m}(t) = \frac{1}{n} \sum_{j=1}^n \textrm{e}^{itX_{j:n}^{1/m}}
\end{equation*}
and
\begin{equation*}
\xi_{n,m}(t)  = \frac{1}{n^m} \sum_{k_1=1}^n \cdots \sum_{k_m=1}^n \textrm{e}^{it\textrm{min}(X_{k_1},\dots,X_{k_m})}.
\end{equation*}
Theorem \ref{char} implies that, for all $t\in \mathbb{R}$, $\phi_m(t)=\xi_m(t)$ if, and only if, $X \sim P(\beta)$ for some $\beta>0$.
We propose a class of tests for the hypothesis in (\ref{hyp}) based on a weighted $L2$ distance between $\phi_{n,m}$ and $\xi_{n,m}$:
\begin{equation*}
    S_{n,m,a} = n \int_{-\infty}^{\infty} |\phi_{n,m}(t)-\xi_{n,m}(t)|^2 w_a(t)\textrm{d}t,
\end{equation*}
where $w_a(t)$ is an appropriate weight function depending on a user defined parameter $a$. This weight function is included in order to ensure the existence of the integral. We choose $w_a$ such that $\int_{-\infty}^{\infty} w_a(t) dt < \infty$. Popular choices of $w_a$ include the Laplace and Gaussian kernels.

Note that
\begin{equation}
    S_{n,m,a} 
=\frac{1}{n^{2m}}\sum_{k_1=1}^n \cdots \sum_{k_{2m}=1}^n h(X_{k_1},\dots,X_{k_{2m}};a), \label{Vstatformula}
\end{equation}
where
\begin{eqnarray*}
h(X_{k_1},\dots,X_{k_{2m}};a)&=& \hspace{-1mm} \int_{-\infty}^{\infty}\left[\cos \left(t(X_{k_1}^{1/m}-X_{k_2}^{1/m})\right) -2 \cos \left(t (X_{k_1}^{1/m}-\min(X_{k_1},\dots,X_{k_m}))\right)\right.\\
&+&\left. \cos \left(t(\min(X_{k_1},\dots,X_{k_m})-\min(X_{k_{m+1}},\dots,X_{k_{2m}})) \right) \right]w_a(t)\textrm{d}t.
\end{eqnarray*}
Therefore, $S_{n,m,a}$ is a $V$ statistic of order $2m$ with kernel $h$. The form of $S_{n,m,a}$ specified in (\ref{Vstatformula}) is computationally expensive (e.g., if $m=4$, then computing $S_{n,m,a}$ requires the evaluation of an eight fold summation). However, after some combinatorics and algebraic manipulation, $\xi_{n,m}(t)$ can be expressed as a single sum in terms of the order statistics:
\begin{equation*}
    \xi_{n,m}(t) = \sum_{j=1}^{n} v_{j,m}e^{itX_{j:n}},
\end{equation*}
with
\begin{equation*}
    v_{j,m}:=\frac{1}{n^m}\left[ (n-j+1)^m- (n-j)^m\right].
\end{equation*}

When using a Laplace kernel as weight function, $w_a(t)=e^{-a|t|}$, we denote the resulting test statistic by $S^{(1)}_{n,m,a}$;
\begin{align*}
S_{n,m,a}^{(1)} & =\frac{1}{n}\sum_{j=1}^{n}\sum_{k=1}^{n}\left\{\frac{2a}{a^{2}+\left(X_{j:n}^{1/m}-X_{k:n}^{1/m}\right)^{2}}-nv_{j,m}\frac{4a}{a^{2}+\left(X_{j:n}-X_{k:n}^{1/m}\right)^{2}}\right.\\
&\left.+ n^2v_{j,m}v_{k,m}\frac{2a}{a^{2}+\left(X_{j:n}-X_{k:n}\right)^{2}}\right\}.
\end{align*}
Upon setting the weight function equal to a Gaussian kernel, $\widetilde{w}_a(t)=e^{-at^2}$, we obtain $S^{(2)}_{n,m,a}$;
\begin{align*}
S_{n,m,a}^{(2)} & =\frac{1}{n}\sqrt{\frac{\pi}{a}}\sum_{j=1}^{n}\sum_{k=1}^{n}\left\{\exp\left[\frac{-\left(X_{j:n}^{1/m}-X_{k:n}^{1/m}\right)^{2}}{4a}\right]-2nv_{j,m}\exp\left[\frac{-\left(X_{j:n}-X_{k:n}^{1/m}\right)^{2}}{4a}\right]\right.\\
&\left.+ n^2v_{j,m}v_{k,m}\exp\left[\frac{-\left(X_{j:n}-X_{k:n}\right)^{2}}{4a}\right]\right\}.
\end{align*}

Above we consider $S_{n,m,a}$, based on $V$ statistics. We now turn our attention to the situation where the empirical characteristic functions are estimated using $U$ statistics. Denote the difference between the $U$ \textit{empirical characteristic functions} of $X^{1/m}$ and $\textrm{min}({X_1,...,X_m})$ by
\begin{equation*}
    D_{n,m}(t)=\phi_{n,m}(t) -\psi_{n,m}(t),
\end{equation*}
where
\begin{equation*}
    \psi_{n,m}(t) = \binom{n}{m}^{-1} \sum_{1 \le k_1<\dots<k_m \le n} \textrm{e}^{it\textrm{min}(X_{k_1},\dots,X_{k_m})}
\end{equation*}
is the empirical characteristic function of the $\binom{n}{m}$ random variables $\textrm{min}(X_{k_1},\dots,X_{k_m})$, $1 \le k_1<\dots<k_m \le n$.

After some algebra it follows that $\psi_{n,m}(t)$ can be expressed as a single summation;
\begin{equation*}
    \psi_{n,m}(t) = \binom{n}{m}^{-1} \ \ \sum_{j=1}^{n-m+1} u_{j,m}\textrm{e}^{itX_{j:n}},
\end{equation*}
where 
\begin{equation*}
    u_{j,m}= \binom{n-j}{m-1}.
\end{equation*}
From Theorem \ref{char} it follows that, if $X_1,\dots,X_n$ is a random sample from the Pareto distribution, then the difference between $\phi_{n,m}(t)$ and $\psi_{n,m}(t)$ should be close to zero. We thus suggest the test statistic
\begin{equation}
    T_{n,m,a} = n \int_{-\infty}^{\infty} |\phi_{n,m}(t)-\psi_{n,m}(t)|^2 w_a(t)\textrm{d}t. \label{five}
\end{equation}

After some algebra, we obtain the following easily calculable expression for the test statistic based on the choices $w_a(t)=e^{-a|t|}$ and $\widetilde{w}_a(t)=e^{-at^2}$, respectively:
\begin{align*}
T_{n,m,a}^{(1)} & =\frac{1}{n}\sum_{j=1}^{n}\sum_{k=1}^{n}\frac{2a}{a^{2}+\left(X_{(j)}^{1/m}-X_{(k)}^{1/m}\right)^{2}}-\sum_{j=1}^{n-m+1}\sum_{k=1}^{n}\binom{n}{m}^{-1}u_{j,m}\frac{4a}{a^{2}+\left(X_{(j)}-X_{(k)}^{1/m}\right)^{2}}\\
 & +n\sum_{j=1}^{n-m+1}\sum_{k=1}^{n-m+1}\binom{n}{m}^{-2}u_{j,m}u_{k,m}\frac{2a}{a^{2}+\left(X_{(j)}-X_{(k)}\right)^{2}},
\end{align*}
and
\begin{align*}
T_{n,m,a}^{(2)} & =\sqrt{\frac{\pi}{a}}\left\{ \frac{1}{n}\sum_{j=1}^{n}\sum_{k=1}^{n}\exp\left[\frac{-\left(X_{(j)}^{1/m}-X_{(k)}^{1/m}\right)^{2}}{4a}\right]\right.\\
 & \left.-2\sum_{j=1}^{n-m+1}\sum_{k=1}^{n}\binom{n}{m}^{-1}u_{j,m}\exp\left[\frac{-\left(X_{(j)}-X_{(k)}^{1/m}\right)^{2}}{4a}\right]\right.\\
 & \left.+n\sum_{j=1}^{n-m+1}\sum_{k=1}^{n-m+1}\binom{n}{m}^{-2}u_{j,m}u_{k,m}\exp\left[\frac{-\left(X_{(j)}-X_{(k)}\right)^{2}}{4a}\right]\right\}. 
\end{align*}


\section{Consistency of the tests}
\label{Sect3}

In this section we only present the results pertaining to $T_{n,m,a}$; the derivations relating to $S_{n,m,a}$ follows from analogous arguments and are therefore omitted for the sake of brevity. Before proceeding to prove the consistency of $T_{n,m,a}$, some comments about the asymptotic null distribution of the test statistic are in order.

$T_{n,m,a}$ is formulated as a weighted $L2$-type statistic involving empirical characteristic functions. The asymptotic null distribution of these classes of statistics are studied in, amongst others, \cite{feuerverger1977empirical}, \cite{baringhaus1988consistent}, \cite{klar2005tests} as well as \cite{baringhaus2017limit}. The asymptotic null distribution of $T_{n,m,a}$ will typically correspond to that of $\int_{-\infty}^{\infty} |V(t)|^2w_a(t)dt =:T_{m,a}$, where $V(\cdot)$ is a Gaussian process with zero-mean. $T_{m,a}$ has the same distribution as $\sum_{j=1}^{\infty}\lambda_j\chi_j^2$, where $\chi_j^2$ are i.i.d random variables following a chi-squared distribution with one degree of freedom. However, the covariance matrix of $V(\cdot)$ as well as the eigenvalues $\lambda_j$ depend on the unknown underlying distribution $F$, usually in a complicated way. We therefore make use of a parametric bootstrap procedure in order to estimate the critical values of these tests (see Section \ref{simsetting}).

The following theorem is concerned with the asymptotic behaviour of $T_{n,m,a}$ under fixed alternative distributions. 
\begin{theorem}
Let $X_1,...,X_n$ be independent copies of a continuous random variable $X$ with finite mean, then
\begin{equation} \label{bc}
    \frac{T_{n,m,a}}{n}\overset{p}{\longrightarrow}\Delta_{m,w} := \int_{-\infty}^{\infty} |\phi_m(t)-\psi_m(t)|^2 w_a(t)\textrm{d}t \geq 0,
\end{equation}
as $n \rightarrow \infty$, with $\Delta_{m,w} =0$ if, and only if, $X \sim P(\beta)$.
\end{theorem}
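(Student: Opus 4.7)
The plan is to rewrite $T_{n,m,a}/n$, via Fubini's theorem, as a linear combination of three $V$- or $U$-statistics with bounded symmetric kernels; apply the strong law for $U$- and $V$-statistics to each piece; and then identify the limit with $\Delta_{m,w}$ by interchanging summation and integration in the reverse direction. Throughout, the interchanges are justified by $|\cos(\cdot)|\le 1$ together with $\int_{-\infty}^{\infty} w_a(t)\,dt < \infty$.

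Concretely, expand
\begin{align*}
\frac{T_{n,m,a}}{n} &= \int_{-\infty}^{\infty}|\phi_{n,m}(t)|^2 w_a(t)\,dt - 2\int_{-\infty}^{\infty}\mathrm{Re}\bigl[\phi_{n,m}(t)\overline{\psi_{n,m}(t)}\bigr]w_a(t)\,dt \\
&\quad + \int_{-\infty}^{\infty}|\psi_{n,m}(t)|^2 w_a(t)\,dt.
\end{align*}
Using the trigonometric identity and pulling the integral inside the finite sums, the first term becomes a degree-$2$ $V$-statistic with kernel $H_1(x,y) = \int \cos\bigl(t(x^{1/m}-y^{1/m})\bigr) w_a(t)\,dt$; the second becomes a degree-$(m+1)$ $U$-statistic with kernel $H_2(x;y_1,\ldots,y_m) = \int \cos\bigl(t(x^{1/m}-\min(y_1,\ldots,y_m))\bigr) w_a(t)\,dt$; and the third becomes a degree-$2m$ $U$-statistic with kernel $H_3(y_1,\ldots,y_m;z_1,\ldots,z_m) = \int \cos\bigl(t(\min(y_1,\ldots,y_m)-\min(z_1,\ldots,z_m))\bigr) w_a(t)\,dt$. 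Each $H_i$ is uniformly bounded by $\int w_a(t)\,dt$, so Hoeffding's strong law applies and each piece converges almost surely to its expectation $E[H_i]$. Reversing Fubini on the limits, the three expectations combine to $\int_{-\infty}^{\infty}|\phi_m(t)-\psi_m(t)|^2 w_a(t)\,dt = \Delta_{m,w}$, and almost-sure convergence implies the claimed convergence in probability.

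It remains to prove the equivalence $\Delta_{m,w}=0 \iff X\sim P(\beta)$. One direction is immediate: if $X\sim P(\beta)$ then Theorem \ref{char} gives $X^{1/m}\stackrel{d}{=}\min(X_1,\ldots,X_m)$, hence $\phi_m\equiv \psi_m$ and the integrand vanishes identically. Conversely, if $\Delta_{m,w}=0$, then the nonnegative continuous function $|\phi_m(t)-\psi_m(t)|^2 w_a(t)$ is zero Lebesgue-a.e., hence identically zero since $w_a$ is strictly positive for both the Laplace and Gaussian choices; uniqueness of characteristic functions then gives $X^{1/m}\stackrel{d}{=}\min(X_1,\ldots,X_m)$, and Theorem \ref{char} forces $X\sim P(\beta)$. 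The main obstacle is merely bookkeeping: organizing the kernels $H_2$ and $H_3$ symmetrically in their $m$-tuple arguments and checking the interchange of integrals and sums, after which the strong law for $U$-statistics applies with no further moment condition. The finite-mean assumption plays no essential role in the argument and is stated for uniformity with the surrounding asymptotic results.
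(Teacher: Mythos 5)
Your argument is correct, but it follows a genuinely different route from the paper. The paper works pointwise in $t$: by the law of large numbers $\phi_{n,m}(t)\to\phi_m(t)$ and $\psi_{n,m}(t)\to\psi_m(t)$, the continuous mapping theorem gives $|\phi_{n,m}(t)-\psi_{n,m}(t)|^2\to|\phi_m(t)-\psi_m(t)|^2$, and the uniform bound $|\phi_{n,m}(t)-\psi_{n,m}(t)|^2\le 4$ together with $\int w_a<\infty$ lets dominated convergence carry the limit through the integral; the characterisation then settles the equivalence $\Delta_{m,w}=0\iff X\sim P(\beta)$ exactly as you do. You instead integrate first and treat $T_{n,m,a}/n$ as a combination of three $U$/$V$-type statistics with bounded kernels, invoking Hoeffding's strong law term by term. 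Each approach buys something: yours delivers almost sure convergence (stronger than the stated convergence in probability) and sidesteps the delicacy of pushing a pointwise-in-$t$ probabilistic limit through an integral, which the paper's one-line appeal to dominated convergence glosses over; the paper's route is much shorter and avoids the combinatorial bookkeeping you flag, which is genuinely needed in your cross term and in $|\psi_{n,m}|^2$, since the index sets there can overlap, so those objects are not literally $U$-statistics --- one must check that the overlapping terms are $O(n^{-1})$ (immediate from boundedness of the kernels, but it does need saying). Your observations that the strictly positive weight plus continuity of characteristic functions force $\phi_m\equiv\psi_m$ when $\Delta_{m,w}=0$, and that the finite-mean hypothesis is not actually used since $T_{n,m,a}$ involves no estimated parameter, are both accurate.
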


\begin{proof}
Recall from (\ref{five}) that
\begin{equation*}
  \frac{T_{n,m,a}}{n}=\int_{-\infty}^{\infty}D_{n,m}^2(t)w_a(t)\textrm{d}t, 
\end{equation*}
where $D_{n,m}^2(t)=|\phi_{n,m}(t)-\psi_{n,m}(t)|^2$. We have, by the law of large numbers, that $\phi_{n,m}(t)\overset{p}{\longrightarrow}\phi_m(t)$ and that $\psi_{n,m}(t)\overset{p}{\longrightarrow}\psi_m(t)$. By the continuous mapping theorem it follows that $D_{n,m}^2(t)\overset{p}{\longrightarrow}D_m^2(t):=|\phi_{m}(t)-\psi_{m}(t)|^2$. Now, $D_{n,m}^2(t)\leq4$, hence an application of Lebesgue's theorem of dominated convergence yields \eqref{bc}. In view of the characterisation given in Theorem \ref{char}, it follows that $\Delta_{m,w}$ is zero if, and only if $X\sim P(\beta)$
\end{proof}

\section{Monte Carlo study}
\label{Sect4}

In this section Monte Carlo simulations are used to compare the finite sample performance of the newly proposed tests to the following five existing goodness-of-fit tests for the Pareto distribution:
\begin{itemize}
\item The traditional Kolmogorov-Smirnov ($KS_{n}$), Cram\'{e}r-von Mises ($CM_n$) and Anderson-Darling $(AD_n)$ tests.
\item A test, proposed by \cite{zhang2002powerful}, based on the likelihood ratio.
The test statistic is given by
\[
ZA_{n} =\int_{-\infty}^{\infty}G_n^{2}(t)\left[F_{n}(t)\left(1-F_{n}(t)\right)\right]^{-1}dF_{n}(t),
\]
where 
\[
G_n^{2}(t)=2n\left\{ F_{n}(t)\log\left(\frac{F_{n}(t)}{F(t,\hat{\beta}_{n})}\right)+\left[1-F_{n}(t)\right]\log\left(\frac{1-F_{n}(t)}{1-F(t,\hat{\beta}_{n})}\right)\right\} 
\]
is the likelihood ratio statistic and $F_n(t) = \frac{1}{n} \sum_{j=1}^n \textrm{I}(X_j \le t)$.
The computational form of the test statistic is
\begin{align*}
ZA_{n} & =-\sum_{j=1}^{n}\left[\frac{\log\left\{ 1-X_{j:n}^{-\hat{\beta}_{n}}\right\} }{n-j+\frac{1}{2}}+\frac{\log\left\{ X_{j:n}^{-\hat{\beta}_{n}}\right\} }{j-\frac{1}{2}}\right].
\end{align*}

\item A test based on Mellin transform proposed by \cite{meintanis2009unified}. The test statistic is given by 
\begin{eqnarray*}
    G_{n,w}&=&\frac{1}{n}\left[(\widehat{\beta}_n+1)^2\sum_{j,k=1}^{n}I^{(0)}_w({X_jX_k})+\sum_{j,k=1}^{n}I^{(2)}_w({X_jX_k})+2(\widehat{\beta}_n+1)\sum_{j,k=1}^{n}I^{(1)}_w({X_jX_k})\right]\\
    &&+\widehat{\beta}_n\left[n\widehat{\beta}_nI^{(0)}_w(1)-2(\widehat{\beta}_n+1)\sum_{j=1}^{n}I^{(0)}_w({X_j})-2\sum_{j=1}^{n}I^{(1)}_w({X_j})\right],
   \end{eqnarray*}
   where 
\begin{equation*}
    I^{(m)}_w(t)=\int_{0}^{\infty}(t-1)^m\frac{1}{x^t}w(t)dt, \quad m=0,1,2.
\end{equation*}
Choosing $w(x)=e^{-ax}$, one has
\begin{equation*}
    I^{(0)}_a(x)=(a+\log x)^{-1},
\end{equation*}
\begin{equation*}
    I^{(1)}_a(x)=\frac{1-a-\log x}{(a+\log x)^2},
\end{equation*}
and
\begin{equation*}
    I^{(2)}_a(x)=\frac{2-2a+a^2+2(a-1)\log x +\log^2 x}{(a+\log x)^3}.
\end{equation*}
The value of the tuning parameter $a$ is set to $1$ in order to obtain the numerical results presented.
\end{itemize}

\subsection{Simulation setting} \label{simsetting}

Power (and size) estimates are calculated at a significance level of 5\% for sample sizes $n=20$ and $n=30$ using 50 000 independent Monte Carlo replications. Since the null distributions of the test statistics depend on the value of the unknown shape parameter $\beta$, we use a parametric bootstrap procedure to calculate numerical critical values. For computational efficiency, we employ the warp-speed bootstrap methodology proposed by \cite{giacomini2013warp}. This methodology is outlined in the following algorithm:
\begin{enumerate}
\item Draw a sample of size $n$, say $X_{1},\dots,X_{n}$ from an alternative distribution and estimate the parameter
$\beta$ by $\widehat{\beta}_n=\overline{X}_n/(\overline{X}_n-1)$.
\item Calculate the value of the test statistic say $S=S_{n}(X_{1},\dots,X_{n})$.
\item Generate a bootstrap sample $X_{1}^{*},\dots,X_{n}^{*}$
by independently sampling from a $P(\widehat{\beta}_{n})$ distribution.
Calculate the value of the test statistic using the bootstrap sample,
$S^{*}=S_{n}(X_{1}^{*},\dots,X_{n}^{*})$.
\item Repeat steps 1--3, MC times to obtain $S_{1},\dots,S_{MC}$ and $S_{1}^{*},\dots,S_{MC}^{*}$,
where $S_{j}$ and $S_{j}^{*}$ denote the values of the test statistic
for the $j^{th}$ sample generated in Steps 2 and 3, respectively.
\item The power estimate is given by $\frac{1}{MC}\sum_{j=1}^{MC}\textrm{I}\left(S_{j}>S_{\left\lfloor MC(1-\alpha)\right\rfloor:MC}^{*}\right)$
for $j=1,\dots,MC$, where $S_{j:MC}^{*}$ denotes the $j^{th}$ order statistic of $S_{1}^{*},\dots,S_{MC}^{*}$, where $\left\lfloor \cdot\right\rfloor $
denotes the floor function and $\textrm{I}(\cdot)$ denotes the indicator function.
\end{enumerate}

The simulation study presented considers two sets of power results. The first is concerned with powers against the fixed alternative distributions specified in Table \ref{tab1}. The resulting empirical powers for sample sizes of $n=20$ and $n=30$ can be found in Tables \ref{tab2} and \ref{tab3}, respectively. Second, we consider some local power estimates where we simulate data from two families of mixture distributions. In the first of the mixture distributions used, we simulate from a $LN(1)$ with probability $p$, and from a Pareto distribution (with the same mean as the $LN(1)$) with probability $1-p$; the empirical powers obtained are reported in Table \ref{tab4}. The second family of mixture distributions is obtained upon replacing the $LN(1)$ distribution by the exponential distribution with mean $0.5$; the calculated powers can be found in Table \ref{tab5}. The results shown in Tables \ref{tab4} and \ref{tab5} include two powers for each listed distribution; the first is associated with a sample size of $20$, while the second is the estimated power based on a sample of size $30$.

\begin{table}
\begin{center}
\small
\begin{tabular}{lll}
\hline
.Alternative & Density function & Notation \\
\hline
Gamma & $\displaystyle\frac{1}{\Gamma(\theta)}(x-1)^{\theta-1}e^{-(x-1)}$ & $\Gamma(\theta)$\\
Weibull & $\displaystyle\theta (x-1)^{\theta-1} \exp\left\{- (x-1)^\theta\right\}$& W($\theta$)\\
%
Lognormal & $\displaystyle\exp\left( -\frac{1}{2}\left(\log(x-1)/\theta\right)^2 \right) \left/ \left\{\theta (x-1) \sqrt{2\pi} \right\} \right.$& LN($\theta$)\\
Linear failure rate & $\displaystyle(1+\theta (x-1))\exp(-(x-1) - \theta (x-1)^2 / 2)$& LF($\theta$) \\

%
Beta-exponential & $\displaystyle\theta e^{-x}(1-e^{-x})^{\theta-1}$  &BEX($\theta$)\\
Dhillon & $\frac{\theta + 1}{x+1}\exp\left\{-(\log(x+1))^{\theta+1}\right\}(\log(x+1))^\theta$ & $DH(\theta)$\\
Log-normal & $\exp\left\{-\frac{1}{2}(\log(x-1)/\theta)^2\right\}/\left\{\theta (x-1) \sqrt{2\pi}\right\}$ & $LN(\theta)$ \\
Half-normal & $\frac{\sqrt{2}}{\theta\sqrt{\pi}}\exp\left(\frac{-(x-1)^2}{2\theta^2}\right)$ & $HN(\theta)$ \\
\hline
\end{tabular}
\end{center}
\caption{Alternative distributions used}
\label{tab1}
\end{table}

The reported empirical powers of the new tests were obtained by setting $m=3$ and $a=2$ in all instances. Several other values for these parameters were considered when performing the Monte Carlo study; however, the specified choices generally resulted in high powers. For the sake of brevity, we omit the results pertaining to other parameter configurations and only display those associated with $m=3$ and $a=2$. All calculations were performed in R; see \cite{rteam}.

\subsection{Simulation results}

The power estimates in Tables \ref{tab2} to \ref{tab5} are the percentages (rounded to the nearest integer) of the number of samples resulting in a rejection of the null hypothesis. For ease of comparison the highest two powers in each row (including ties) are printed in bold.

\begin{table}[!htbp] \centering 
  \caption{Empirical powers against fixed alternatives for $n=20$} 
  \label{} 
  \resizebox{\columnwidth}{!}{
\begin{tabular}{@{\extracolsep{13pt}} lccccccccccc} 
\\[-1.8ex]\hline 
\hline \\[-1.8ex] 
Dist & $KS_n$ & $CV_n$ & $AD_n$ & $ZA_n$ & $G_{n,2}$ & $S^{(1)}_{n,3,2}$ & $S^{(2)}_{n,3,2}$ &  $T^{(1)}_{n,3,2}$  & $T^{(2)}_{n,3,2}$\\
\hline \\[-1.8ex] 
$P(2)$  &    \textbf{5} & \textbf{5} & \textbf{5} & \textbf{5} & \textbf{5} & \textbf{5} & \textbf{5} & \textbf{5} & \textbf{5} \\
$P(5)$  &    \textbf{5} & \textbf{5} & \textbf{5} & \textbf{5} & \textbf{5} & 4     & 4     & 4     & 4 \\
$P(10)$  &    \textbf{5} & \textbf{5} & \textbf{5} & \textbf{5} & \textbf{5} & 4     & 4     & 4     & 4 \\
$\Gamma{(0.5)}$  &  15    & 15    & \textbf{37} & \textbf{38} & 9     & 4     & 3     & 11    & 10 \\
$\Gamma{(0.8)}$  &    15    & 17    & 15    & 11    & 16    & \textbf{21} & \textbf{21} & 16    & 15 \\
$\Gamma{(1)}$  &    38    & 45    & 43    & 34    & 47    & \textbf{50} & \textbf{51} & 43    & 42 \\
$\Gamma{(1.2)}$  &    65    & 74    & 72    & 64    & 76    & \textbf{77} & \textbf{78} & 71    & 71 \\
$W(0.5)$     &    16    & 15    & \textbf{43} & \textbf{45} & 13    & 19    & 16    & 30    & 26 \\
$W(0.8)$     &   15    & 16    & 16    & 10    & 14    & \textbf{20} & \textbf{20} & 16    & 14 \\
$W(1.2)$     &    65    & 75    & 73    & 65    & 78    & \textbf{79} & \textbf{79} & 73    & 72 \\
$W(1.5)$     &    91    & 96    & 96    & 93    & \textbf{97} & \textbf{97} & \textbf{97} & 95    & 95 \\
$LN(1)$    &    73    & 82    & 83    & \textbf{90} & \textbf{86} & 84    & \textbf{86} & 80    & 81 \\
$LN(1.2)$    &    42    & 51    & 52    & \textbf{59} & \textbf{55} & 51    & 54    & 45    & 46 \\
$LN(1.5)$    &    17    & 21    & \textbf{22} & \textbf{22} & 20    & 19    & 21    & 17    & 17 \\
$LN(2.5)$    &    11    & 9     & 25    & 22    & 4     & 31    & 35    & \textbf{43} & \textbf{47} \\
$LFR(0.2)$    &    45    & 53    & 50    & 40    & 55    & \textbf{59} & \textbf{59} & 50    & 50 \\
$LFR(0.5)$    &    51    & 59    & 56    & 46    & 62    & \textbf{65} & \textbf{65} & 57    & 56 \\
$LFR(0.8)$    &    54    & 62    & 59    & 50    & 66    & \textbf{68} & \textbf{68} & 60    & 60 \\
$LFR(1)$    &    57    & 66    & 62    & 53    & 69    & \textbf{71} & \textbf{71} & 63    & 63 \\
$BE(0.5)$    &    12    & 11    & \textbf{32} & \textbf{34} & 6     & 5     & 3     & 10    & 7 \\
$BE(0.8)$    &    17    & 20    & 18    & 13    & 19    & \textbf{24} & \textbf{24} & 19    & 18 \\
$BE(1)$    &    39    & 45    & 43    & 34    & 47    & \textbf{51} & \textbf{51} & 43    & 42 \\
$BE(1.5)$    &    84    & 91    & 91    & 86    & \textbf{93} & 92    & \textbf{93} & 89    & 89 \\
$D(0.2)$    &    26    & 31    & \textbf{32} & 29    & 30    & 30    & \textbf{32} & 26    & 26 \\
$D(0.4)$    &    50    & 59    & 59    & 57    & \textbf{60} & 58    & \textbf{61} & 52    & 53 \\
$D(0.6)$    &    73    & 82    & 82    & 80    & \textbf{84} & 82    & \textbf{84} & 77    & 79 \\
$D(0.8)$    &    89    & 94    & 94    & 93    & \textbf{95} & 94    & \textbf{95} & 92    & 93 \\
$HN(0.8)$    &    59    & 68    & 65    & 56    & 71    & \textbf{72} & \textbf{72} & 65    & 64 \\
$HN(1)$    &    66    & 75    & 73    & 62    & 78    & \textbf{79} & \textbf{79} & 72    & 72 \\
\hline \\[-1.8ex] 
\end{tabular}
\label{tab2}
}
\end{table} 

\begin{table}[!htbp] \centering 
  \caption{Empirical powers against fixed alternatives for $n=30$} 
  \label{} 
  \resizebox{\columnwidth}{!}{
\begin{tabular}{@{\extracolsep{13pt}} lccccccccccc} 
\\[-1.8ex]\hline 
\hline \\[-1.8ex] 
Dist & $KS_n$ & $CV_n$ & $AD_n$ & $ZA_n$ & $G_{n,2}$ &  $S^{(1)}_{n,3,2}$  & $S^{(2)}_{n,3,2}$  & $T^{(1)}_{n,3,2}$ &  $T^{(2)}_{n,3,2}$\\
\hline \\[-1.8ex] 
$P(2)$     &    \textbf{5} & \textbf{5} & \textbf{5} & \textbf{5} & \textbf{5} & \textbf{5} & \textbf{5} & \textbf{5} & \textbf{5} \\
$P(5)$     &    \textbf{5} & \textbf{5} & \textbf{5} & \textbf{5} & \textbf{5} & 4     & 4     & 4     & 4 \\
$P(10)$     &    \textbf{5} & \textbf{5} & \textbf{5} & \textbf{5} & \textbf{5} & 4     & 4     & 4     & 4 \\
$\Gamma{(0.5)}$     &    21    & 21    & \textbf{49} & \textbf{52} & 11    & 10    & 7     & 21    & 17 \\
$\Gamma{(0.8)}$     &    19    & 22    & 20    & 14    & 21    & \textbf{26} & \textbf{26} & 20    & 19 \\
$\Gamma{(1)}$     &    52    & 60    & 58    & 45    & 61    & \textbf{64} & \textbf{65} & 57    & 57 \\
$\Gamma{(1.2)}$     &    82    & 90    & 89    & 81    & \textbf{91} & 90    & \textbf{92} & 87    & 87 \\
$W(0.5)$     &    23    & 21    & \textbf{54} & \textbf{60} & 16    & 35    & 28    & 48    & 41 \\
$W(0.8)$     &    19    & 21    & 21    & 12    & 18    & \textbf{24} & \textbf{24} & 20    & 18 \\
$W(1.2)$     &    83    & 91    & 90    & 84    & \textbf{93} & 92    & \textbf{93} & 89    & 89 \\
$W(1.5)$     &    98    & \textbf{100} & \textbf{100} & 99    & \textbf{100} & \textbf{100} & \textbf{100} & \textbf{100} & \textbf{100} \\
$LN(1)$    &    88    & 94    & 95    & \textbf{98} & \textbf{96} & 95    & \textbf{96} & 93    & 94 \\
$LN(1.2)$    &    55    & 66    & 68    & \textbf{79} & \textbf{71} & 65    & 69    & 60    & 62 \\
$LN(1.5)$    &    22    & 26    & \textbf{29} & \textbf{30} & 26    & 22    & 25    & 19    & 20 \\
$LN(2.5)$    &    13    & 10    & 31    & 31    & 4     & 47    & 51    & \textbf{58} & \textbf{61} \\
$LFR(0.2)$    &    62    & 71    & 69    & 57    & 73    & \textbf{74} & \textbf{75} & 68    & 68 \\
$LFR(0.5)$    &    69    & 78    & 76    & 65    & 80    & \textbf{81} & \textbf{81} & 75    & 75 \\
$LFR(0.8)$    &    73    & 81    & 80    & 69    & \textbf{84} & \textbf{85} & \textbf{84} & 79    & 79 \\
$LFR(1)$    &    74    & 83    & 81    & 72    & 85    & \textbf{86} & \textbf{86} & 81    & 80 \\
$BE(0.5)$    &    15    & 15    & \textbf{41} & \textbf{46} & 7     & 9     & 5     & 17    & 12 \\
$BE(0.8)$    &    24    & 26    & 25    & 16    & 25    & \textbf{31} & \textbf{31} & 25    & 24 \\
$BE(1)$    &    53    & 62    & 60    & 46    & 63    & \textbf{65} & \textbf{66} & 58    & 58 \\
$BE(1.5)$    &    95    & 98    & 98    & 97    & \textbf{99} & \textbf{99} & \textbf{99} & 98    & 98 \\
$D(0.2)$    &    34    & \textbf{41} & \textbf{41} & 39    & 40    & 37    & 39    & 33    & 33 \\
$D(0.4)$    &    67    & 75    & \textbf{76} & 73    & \textbf{77} & 73    & \textbf{76} & 68    & 70 \\
$D(0.6)$    &    89    & 94    & 94    & 93    & \textbf{95} & 93    & \textbf{95} & 91    & 93 \\
$D(0.8)$    &    97    & \textbf{99} & \textbf{99} & \textbf{99} & \textbf{99} & \textbf{99} & \textbf{99} & 98    & \textbf{99} \\
$HN(0.8)$    &    78    & 86    & 84    & 75    & \textbf{88} & \textbf{88} & \textbf{88} & 83    & 83 \\
$HN(1)$    &    85    & 91    & 90    & 82    & \textbf{92} & \textbf{92} & \textbf{92} & 89    & 88 \\
\end{tabular}
\label{tab3}
}
\end{table} 

\begin{table}[!htbp] \centering 
  \caption{Empirical powers against lognormal mixtures} 
  \label{} 
\begin{tabular}{@{\extracolsep{1pt}} cccccccccccccc} 
\hline 
$p$ & $KS_n$ & $CV_n$ & $AD_n$ & $ZA_n$ & $G_{n,2}$ & $S^{(1)}_{n,3,2}$  & $S^{(2)}_{n,3,2}$ & $T^{(1)}_{n,3,2}$ & $T^{(2)}_{n,3,2}$\\
\hline

\multirow{2}{12pt}{$0.0$} &\textbf{5} & \textbf{5} & \textbf{5} & \textbf{5} & \textbf{6} & \textbf{5} & \textbf{5} & \textbf{5} & \textbf{5} \\
 &\textbf{5} & \textbf{5} & \textbf{5} & \textbf{5} & \textbf{5} & \textbf{5} & \textbf{5} & \textbf{5} & \textbf{5}  \\

 \multirow{2}{12pt}{$0.1$} &   \textbf{5} & \textbf{5} & \textbf{5} & \textbf{5} & \textbf{5} & \textbf{5} & \textbf{5} & \textbf{5} & 4\rule{0pt}{20pt}\\
 &    \textbf{5} & \textbf{5} & \textbf{5} & \textbf{5} & \textbf{5} & \textbf{5} & \textbf{5} & \textbf{5} & \textbf{5}  \\

\multirow{2}{12pt}{$0.2$} &    6     & 6     & 5     & 6     & 6  & \textbf{7} & \textbf{7}   & 6     & 5\rule{0pt}{20pt} \\
 &    6     & 6     & 5     & 6     & 6     & \textbf{7} & \textbf{7} & 6     & 5 \\

\multirow{2}{12pt}{$0.3$} &    7     & 8     & 7     & 7     & 8     & \textbf{10} & \textbf{9}   & 7     & 7\rule{0pt}{20pt} \\
 &    8     & 8     & 8     & 8     & 9    &  \textbf{11} & \textbf{11} & 9     & 8 \\

\multirow{2}{12pt}{$0.4$} &    10    & 11    & 10    & 10    & 12    & \textbf{14} & \textbf{14} & 11   & 10\rule{0pt}{20pt} \\
 &    12    & 13    & 12    & 11    & 14    & \textbf{16}    & \textbf{17} & 13    & 13    \\

 \multirow{2}{12pt}{$0.5$} &    15    & 17    & 15    & 15    & 18  & \textbf{20} & \textbf{20} & 16    & 15\rule{0pt}{20pt} \\
 &    18    & 22    & 21    & 18    & 23    & \textbf{26} & \textbf{26} & 21    & 20 \\

\multirow{2}{12pt}{$0.6$} &   22    & 26    & 24    & 23    & 28    & \textbf{29} & \textbf{30} & 24    & 23\rule{0pt}{20pt} \\
 &    29    & 34    & 32    & 28    & 35    & \textbf{38} & \textbf{39} & 32    & 32 \\

\multirow{2}{12pt}{$0.7$} &    31    & 37    & 35    & 33    & 39   & \textbf{41}    & \textbf{42}     & 35    & 34\rule{0pt}{20pt} \\
 &    43    & 50    & 48    & 44    & 52     & \textbf{53}   & \textbf{55}     & 47    & 48 \\

\multirow{2}{12pt}{$0.8$} &    43    & 51    & 50    & 49    & 54    & \textbf{55} & \textbf{56} & 48    & 48\rule{0pt}{20pt} \\
 &    59    & 67    & 67    & 63    & \textbf{70}    & \textbf{70}    & \textbf{73}     & 65    & 66 \\

\multirow{2}{12pt}{$0.9$} &    57    & 66    & 66    & 68    & \textbf{71} & 70    & \textbf{71}& 64    & 64\rule{0pt}{20pt} \\
 &    75    & 83    & 83    & 83    & \textbf{86} & 85    & \textbf{87} & 81    & 82 \\

\hline \hline
\label{tab4}
\end{tabular} 
\end{table}


\begin{table}[!htbp] \centering 
  \caption{Empirical powers against exponential mixtures}
  \label{} 
\begin{tabular}{@{\extracolsep{1pt}} cccccccccccccc} 
\hline 
$p$ & $KS_n$ & $CV_n$ & $AD_n$ & $ZA_n$ & $G_{n,2}$ & $S^{(1)}_{n,3,2}$  & $S^{(2)}_{n,3,2}$ & $T^{(1)}_{n,3,2}$ & $T^{(2)}_{n,3,2}$\\
\hline
\multirow{2}{12pt}{$0.0$} & \textbf{5} & \textbf{5} & \textbf{5} & \textbf{5} & \textbf{5} & \textbf{5} & \textbf{5} & \textbf{5} & \textbf{5}  \\
&\textbf{5} & \textbf{5} & \textbf{5} & \textbf{5} & \textbf{5} & \textbf{5} & \textbf{5} & \textbf{5} & \textbf{5}  \\

\multirow{2}{12pt}{$0.1$} & \textbf{6} & \textbf{6} & 5     & \textbf{6} & \textbf{6} & \textbf{6}      & 5     & 5     & 5 \rule{0pt}{20pt} \\
&\textbf{6} & \textbf{6} & \textbf{6} & \textbf{6} & \textbf{6} & \textbf{6} & \textbf{6} & 5     & 5 \\

\multirow{2}{12pt}{$0.2$} & \textbf{6} & \textbf{6} & \textbf{6} & \textbf{6} & \textbf{6} & \textbf{6} & \textbf{6} & 5     & 5 \rule{0pt}{20pt} \\
&\textbf{6} & \textbf{6} & \textbf{6} & \textbf{6} & \textbf{6} & \textbf{6} & \textbf{6} & \textbf{6}  & 5 \\

\multirow{2}{12pt}{$0.3$} &  \textbf{7} & \textbf{7} & 6     & \textbf{7} & \textbf{7} & \textbf{7} & 6     & 6     & 6 \rule{0pt}{20pt} \\
 &\textbf{7} & \textbf{7} & \textbf{7} & \textbf{7} & \textbf{7} & \textbf{8} & \textbf{7}  & 6     & 6 \\

\multirow{2}{12pt}{$0.4$} &   7     & \textbf{8} & 7     & \textbf{8}  & \textbf{8} & \textbf{8} & \textbf{8} & 7     & 6 \rule{0pt}{20pt} \\
&8     & 8     & 8     & 8     & \textbf{9} & \textbf{9} & 8     & 7     & 7 \\

\multirow{2}{12pt}{$0.5$} &  8     & 8     & 7     & 8     & \textbf{9} & \textbf{9} & 8     & 7     & 7\rule{0pt}{20pt}      \\
 &9     & 9     & 9     & \textbf{10} & \textbf{10}  & \textbf{10} & \textbf{10}  & 8     & 8 \\

\multirow{2}{12pt}{$0.6$} &  9     & 10    & 9     & 9     & \textbf{11} & \textbf{11}     & 10   & 9      & 8 \rule{0pt}{20pt} \\
  & 11    & 12    & 11    & 11    & 12    & \textbf{13}  & 12      & 10    & 10 \\

\multirow{2}{12pt}{$0.7$} &  10    & 11    & 10    & 10    & \textbf{12} & \textbf{13} & \textbf{12}    & 10    & 10 \rule{0pt}{20pt} \\
 &12    & 14    & 12    & 13    & 14    & \textbf{16} & \textbf{15}    & 13     & 12 \\

\multirow{2}{12pt}{$0.8$} &  11    & 13    & 11    & 11    & 14  &  \textbf{16} & \textbf{15} & 12    & 11\rule{0pt}{20pt}  \\
  & 15    & 17    & 15    & 15    & 18   & \textbf{20} & \textbf{20}     & 16    & 16 \\

\multirow{2}{12pt}{$0.9$} & 14    & 15    & 13    & 14    & 16     & \textbf{19} & \textbf{19} & 15    & 14 \rule{0pt}{20pt} \\
   &18    & 20    & 18    & 17    & 21    & \textbf{25} & \textbf{25}  & 20    & 20     \\
\hline \hline
\label{tab5}
\end{tabular} 

\end{table} 


The results shown in Tables \ref{tab2} and \ref{tab3} indicate that all the tests maintain the nominal significance level of 0.05. Furthermore, the results demonstrate that the proposed test $S^{(2)}_{n,3,2}$ outperforms all the other tests for the majority of alternatives considered, closely followed by $S^{(1)}_{n,3,2}$ and $G_{n,2}$. The powers of $T^{(1)}_{n,3,2}$ and $T^{(2)}_{n,3,2}$ do not compare favorably to those of $S^{(1)}_{n,3,2}$ and $S^{(2)}_{n,3,2}$ but are still competitive in terms of power against the traditional tests; i.e., $KS_n$, $CV_n$ and $AD_n$. We also note that the tests $T^{(1)}_{n,3,2}$ and $T^{(2)}_{n,3,2}$ produce the highest powers against the $LN(2.5)$ alternative by a substantial margin. The results obtained for the mixture distributions are in accordance with those associated with the fixed alternatives.

\newpage

\section{Practical application}
\label{Sect5}

Below, we apply each of the tests considered to an observed data set. The data concerned is the the lifetime tournament earnings, up to 1980, of all professional golfers whose earnings exceeded \$700 000, as reported in the Golf magazine, 1981 yearbook. This data set was also discussed and analysed by \cite{Arn2015}. The reported salaries, in thousands of dollars, can be found in Table \ref{tab6}.

\begin{table}
\begin{centering}
\begin{tabular}{cccccccccc}
\hline 
708 & 712 & 729 & 746 & 753 & 759 & 769 & 771 & 778 & 778 \\
814 & 816 & 820 & 825 & 841 & 844 & 849 & 871 & 878 & 883 \\
912 & 944 & 965 & 1001 & 1005 & 1016 & 1031 & 1051 & 1056 & 1066 \\
1092 & 1095 & 1109 & 1171 & 1184 & 1208 & 1338 & 1374 & 1410 & 1433 \\
1519 & 1537 & 1627 & 1684 & 1690 & 1829 & 1858 & 2202 & 2474 & 3581 \\
\hline 
\end{tabular}
\par\end{centering}
\caption{The golfer data set.}
\label{tab6}
\end{table}

The support of the data in Table \ref{tab6} is, per definition, $(700,\infty)$. As a result, we rescale the data by dividing each number in the table by $700$ and then we test the hypothesis that the resulting data are realised from a Pareto distribution. When fitting a Pareto distribution to the data using the method of moments, we obtain $\widehat{\beta}_n=2.495$. Before proceeding to the results pertaining to the formal testing procedures, we consider visual tests of fit for the Pareto distribution. Figure \ref{Dists.pdf} shows the empirical distribution function of the rescaled data, together with the fitted Pareto distribution function. Figure \ref{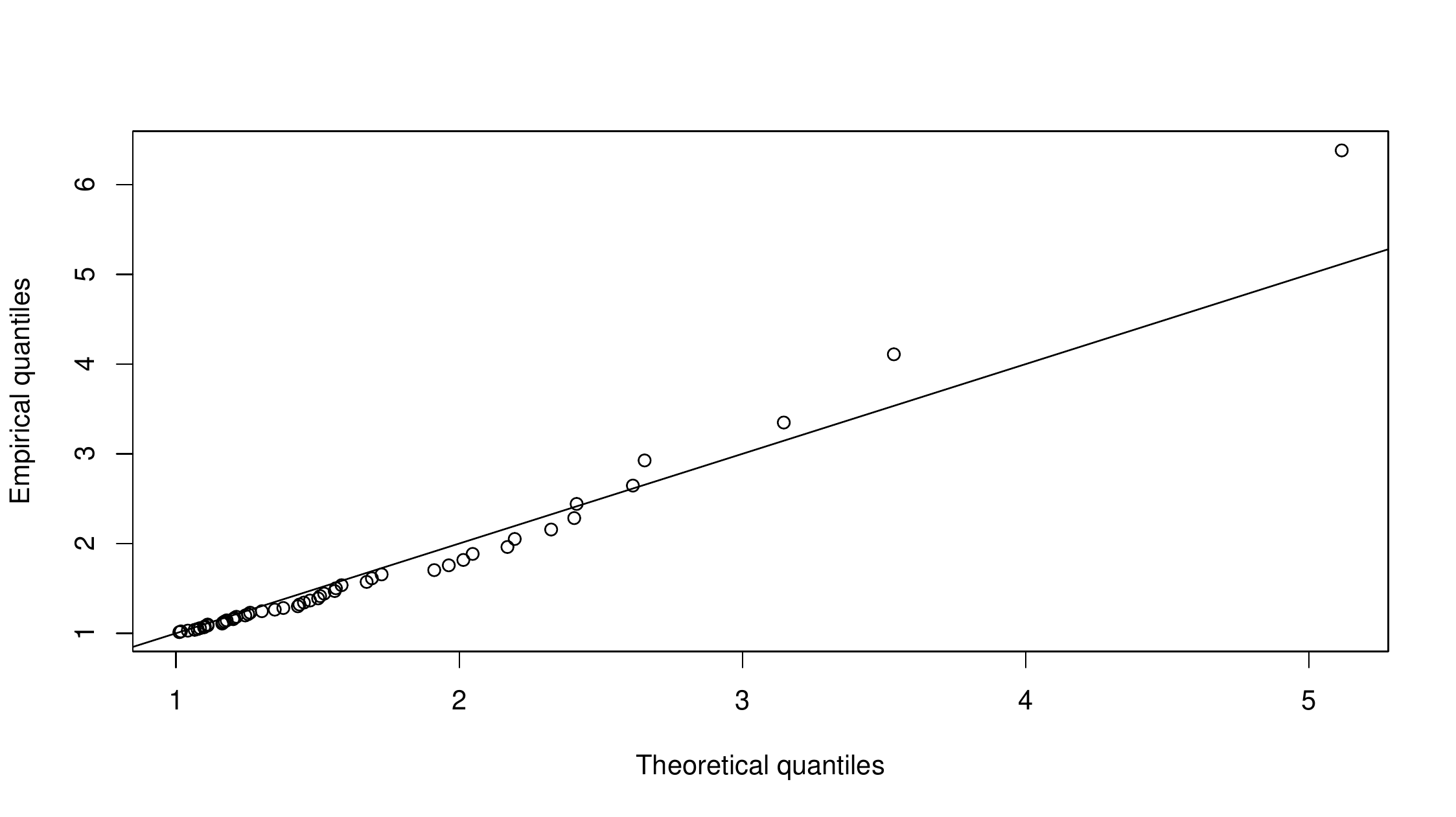} shows a quantile-quantile plot comparing the empirical quantiles to those of the fitted distribution. Both figures indicate a close correspondence between the empirical properties of the data and those expected under the null hypothesis of the Pareto distribution.

\begin{figure}[tb]
    \centering
    \includegraphics[height=7cm, width=12cm]{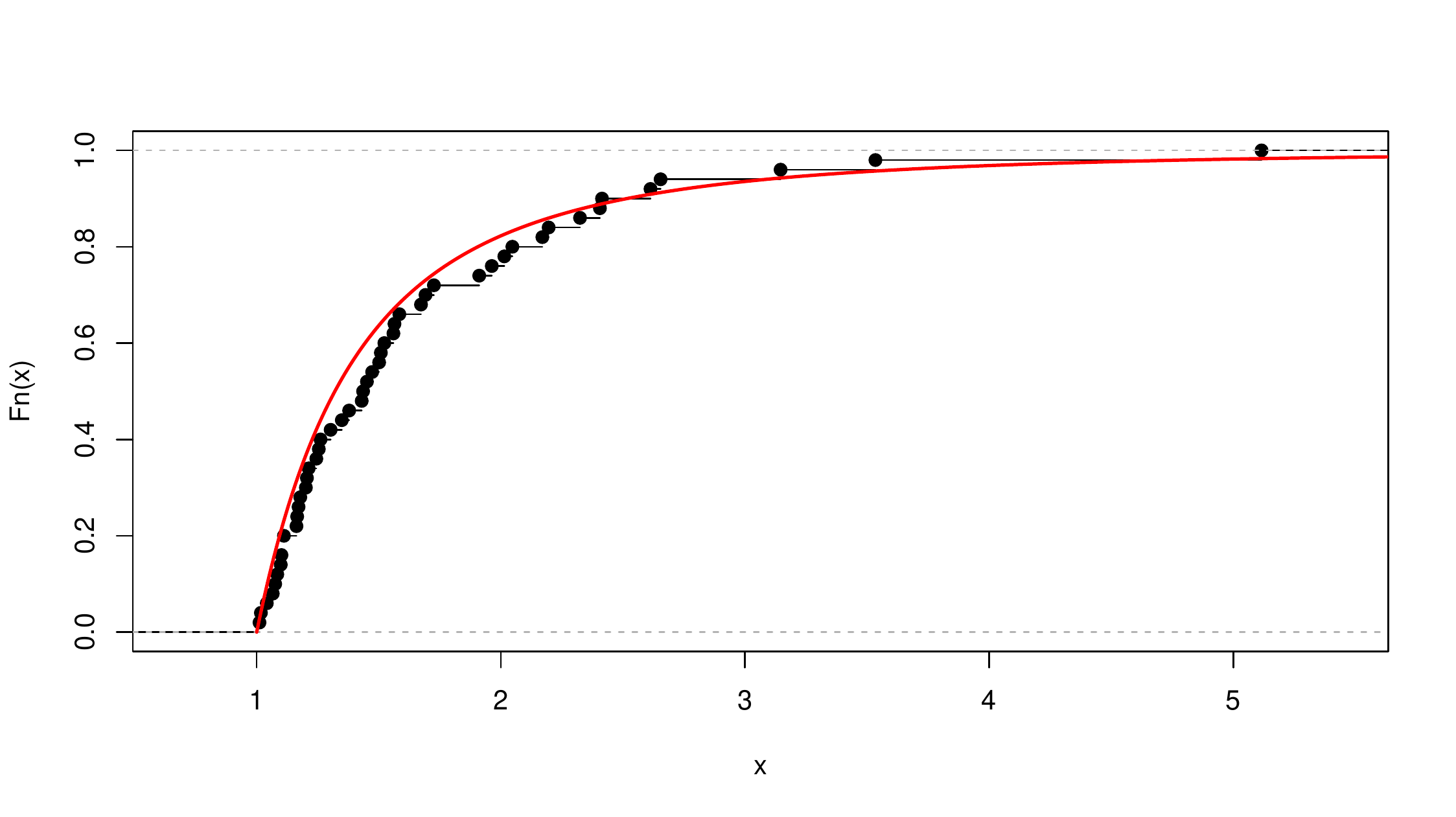}
    \caption{Empirical and fitted distribution functions.}
    \label{Dists.pdf}
\end{figure}

\begin{figure}[tb]
    \centering
    \includegraphics[height=7cm, width=10cm]{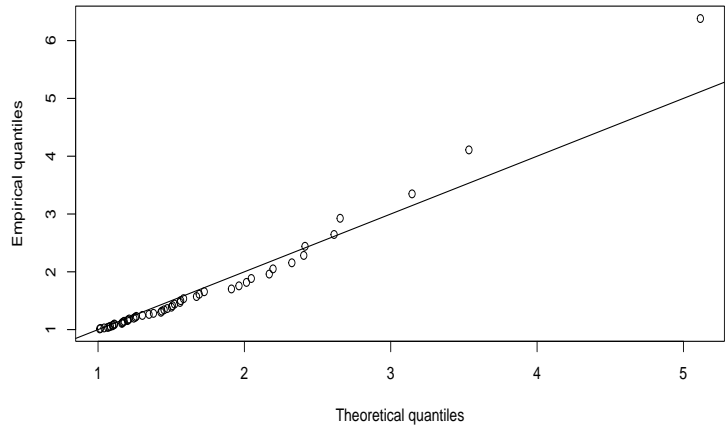}
    \caption{Quantile-quantile plot.}
    \label{QQplot.pdf}
\end{figure}

We now turn our attention to the results obtained using the goodness-of-fit tests discussed in Section \ref{Sect4}. Table \ref{tab7} contains the test statistic values with the corresponding $p$-values of the tests. These $p$-values were calculated based on 10 \ 000 samples of size 50 simulated from a Pareto distribution with parameter $\beta=2.495$ (corresponding to the parameter of the fitted distribution).

\begin{table}
\begin{centering}
\begin{tabular}{cllcccll}
\hline 
Test & Statistic  & $p$-value &&& Test & Statistic  & $p$-value\tabularnewline
\hline 
\hline 
$KS_n$ & 0.125 & 0.3211 &&& $S^{(1)}_{n,3,2}$ & $4 \times 10^{-3}$ & 0.2245 \\
$CV_n$ & 0.158 & 0.2873 &&& $S^{(2)}_{n,3,2}$ & $3\times10^{-3}$ & 0.1929 \\
$AD_n$ & 3.433 & 0.2857 &&& $T^{(1)}_{n,3,2}$ & $2\times10^{-3}$ & 0.3311 \\
$ZA_n$ & 39.332 & 0.0991 &&& $T^{(2)}_{n,3,2}$ & $2\times10^{-3}$ & 0.2869 \\
$G_{n,2}$ & 0.792 & 0.1783 \\
\hline 
\end{tabular}
\par\end{centering}
\caption{Summary results for the lifetime tournament earnings through golf.}
\label{tab7}
\end{table}


It is clear from the reported $p$-values in Table \ref{tab7} that none of the tests considered reject the assumption that the data are realised from a Pareto distribution at a 5\% level of significance. The is in accordance with the findings of \cite{Arn2015}.

\section{Concluding remarks}
\label{Sect6}

In this study, we introduced new classes of goodness-of-fit tests for the Pareto distribution based on a characterisation involving the sample minimum. The proposed tests are based on empirical characteristic functions estimated via $V$ and $U$ statistics, and include two tuning parameters. We compared the performances of our tests to those of some commonly used goodness-of-fit tests and the results presented suggest that the newly proposed tests are competitive, often outperforming the existing tests against the alternative distributions considered. Based on the numerical performance of the tests, we recommend using the tests based on $V$ statistics (using a Gaussian kernel and setting the tuning parameters to $m=3$ and $a=2$; this test is denoted $S^{(2)}_{n,3,2}$ in the text).

We mentioned in the introduction that the Pareto distribution has found applications in the field of survival analysis and reliability theory. Censoring often occurs in these fields due to the nature of the study. A possible avenue for future research is to modify the newly proposed statistics to test for censored Pareto distributions. Some work in this regards has been done by \cite{fernandez2020kaplan}, where they consider Kaplan-Meier $U$ and $V$ statistics, and by \cite{cuparic2022new}, where they derive asymptotic results for a new characterisation based test for exponentiality in the presence of random right censored data. Further details regarding the development of tests for the cencored exponential distribution can be found in \cite{bothma2021exponential}, while \cite{bothma2021weibull} considers tests for the censored Weibull distribution. For results pertaining specifically to tests for the censored Pareto distribution, the interested reader is referred to \cite{Lethani}.

\bibliographystyle{apa-good}
\bibliography{references}

\end{document}